\newtheorem{theo}{Theorem} 
\newtheorem{lemma}[theo]{Lemma}
\begin{document}

\title{A simple proof that Gaussian attacks are optimal among collective attacks against continuous-variable quantum key distribution with a Gaussian modulation.}

\author{Anthony Leverrier}
\affiliation{Institut Telecom / Telecom ParisTech, CNRS LTCI,\\
  46, rue Barrault, 75634 Paris Cedex 13, France} \author{Philippe
  Grangier}
\affiliation{Laboratoire Charles Fabry, Institut d'Optique, CNRS, Univ. Paris-Sud,\\
  Campus Polytechnique, RD 128, 91127 Palaiseau Cedex, France}

\date{\today}

\begin{abstract}
In this paper, we give a simple proof of the fact that the optimal collective attacks against continous-variable quantum key distribution with a Gaussian modulation are Gaussian attacks. Our proof, which makes use of symmetry properties of the protocol in phase-space, is particularly relevant for the finite-key analysis of the protocol, and therefore for practical applications.
\end{abstract}


\maketitle

\section{Introduction}

Quantum key distribution (QKD) is a cryptographic primitive allowing two distant parties, traditionally referred to as Alice and Bob, to establish a secret key \cite{SBC08}. This key can later be used to secure sensitive communication thanks to one-time pad for instance. QKD has received a lot of attention lately as it is the first application of quantum information science which could be developed on a large scale. For instance, metropolitan networks are certainly compatible with present technology, as was recently demonstrated in Vienna with the SECOQC project \cite{PPA09}. 

Historically, QKD protocols have been using discrete variables, meaning that Alice and Bob exchange information encoded on a finite-dimensional Hilbert space such as the polarization of a single photon for instance. Hence, protocols such as BB84 \cite{BB84} have been studied for a long time and their unconditional security is today well established \cite{ren05}, at least in a scenario where side-channels are not considered \cite{SK09}.

More recently, it was suggested that one could encode information on continuous variables in phase-space to perform QKD \cite{ral99}. Practical schemes requiring only coherent states together with an homodyne detection were introduced by Grosshans and Grangier in 2002 (GG02), first with direct \cite{GG02} and then with reverse \cite{GG02b} reconciliation, and later successfully implemented \cite{GVW03,FDD09}. These protocols were proven secure against collective attacks \cite{GC06,NGA06}, which are optimal in the asymptotic limit \cite{RC09}. Let us recall that the optimal collective attacks are Gaussian attacks, meaning that the eavesdropper operation corresponds to a Gaussian map.

The basic idea of the protocol GG02 is the following: Alice draws two random numbers $q_A$ and $p_A$ with a Gaussian probability distribution and sends the coherent state $|q_A + i p_A\rangle$ to Bob. Bob chooses a random quadrature and performs an homodyne detection for that quadrature: he then obtains the classical variable $y$, a noisy version of either $q_A$ or $p_A$. He finally informs Alice of his choice of quadrature. Alice keeps her relevant classical variable which she notes $x$. Repeating this operation $n$ times, Alice and Bob end up with two correlated vectors ${\bf x} = (x_1,\cdots, x_n)$ and ${\bf y} = (y_1,\cdots, y_n)$ from which they can distill a secret key by applying the usual classical post-processing composed of parameter estimation, error reconciliation and privacy amplification. Note that a small variation of this protocol consists in performing an heterodyne detection on Bob's side instead of an homodyne detection \cite{WLB04}. The security of this variant was investigated in \cite{LG07, SMG07} where the optimal individual attack is explicited.

Other variations of this GG02 protocol consist in replacing the Gaussian modulation with a discrete modulation \cite{NH03,NH04,HL06,ZHR09,SL09,LG09,LG10}, or adding a post-selection procedure to the protocol \cite{SRL02,LSS05,NH05,NH06,HL07}.

One main advantage of the protocols with a Gaussian modulation but without post-selection is that they display a high level of symmetry. In particular, a specific symmetry of these protocols in phase-space was recently investigated in \cite{LKG09} and appears to be a good approach in order to improve the known lower bounds of the secret key rate against arbitrary attacks \emph{in the finite size regime}. Remember that Ref. \cite{RC09} proves that collective attacks are optimal in the asymptotic regime thanks to a de Finetti-type theorem which gives rather conservative bounds when finite size effects are taken into account. A general framework for the finite size analysis of QKD was developped in \cite{SR08} and the first numerical results appear to be rather pessimistic \cite{CS09}, hence giving incentive to improve known bounds, in particular with the help of symmetries. Partial results in this direction, such as a de Finetti-type theorem in phase-space, were already obtained in \cite{LC09}. Whereas in \cite{LKG09}, the authors examined the possibility to use the specific symmetries of GG02 to prove the security of the protocol against general attacks, our goal here is more modest as we show that these symmetries allow one to easily recover known results concerning the optimality of Gaussian attacks among all collective attacks. A novelty of our proof compared to previous techniques \cite{GC06,NGA06} is that it can be applied in the finite size scenario.

\section{A new security proof against collective attacks}

The main idea of our proof is to use symmetries of the protocol to simplify the analysis of its security. In general, the security of a usual Prepare and Measure protocol where Alice prepares and sends quantum states to Bob (coherent states with a Gaussian modulation in the case of GG02) is analysed through an equivalent entangled version of the protocol. For GG02, this entangled version consists for Alice in preparing two-mode squeezed vacuua, measuring one mode of these states with an heterodyne detection and sending the other mode to Bob through the quantum channel \cite{GCW03}.

The security of the entangled protocol is then analysed through the $n$-mode bipartite quantum state $\rho_{AB} \in \left( \mathcal{H}_A \otimes \mathcal{H}_B \right)^{\otimes n}$ shared by Alice and Bob before they perform their measurements. Here, $\mathcal{H}_A$ and $\mathcal{H}_B$ refer respectively to Alice and Bob's single mode Hilbert spaces. Unfortunately, the total Hilbert space $\left( \mathcal{H}_A \otimes \mathcal{H}_B \right)^{\otimes n}$ is usually too big to allow for a complete analysis. 

A solution is therefore to use specific symmetries of the protocol in order to show that only a symmetric subspace of $\left( \mathcal{H}_A \otimes \mathcal{H}_B \right)^{\otimes n}$ needs to be considered. Indeed, one can show that if a QKD protocol is invariant under a certain class of symmetries, say invariance under permutation of the subsystems of Alice and Bob, then one can safely assume that the quantum state $\rho_{AB}$ displays the same symmetry.

This might look a bit suspicious at first sight as one may object that the eavesdropper is free to break the symmetry of the state, hence invalidating the previous statement. The way to solve this apparent paradox is to recall that, without loss of generality, one can always assume that Eve is given a purification $|\psi\rangle_{ABE}$ of $\rho_{AB}$. Since the protocol is invariant under the group of symmetry $\mathcal{G}$, Alice and Bob can consider the state $\bar{\rho}_{AB}$ which is obtained by averaging their initial state $\rho_{AB}$ over the group $\mathcal{G}$. As far as Alice and Bob are concerned, applying the QKD protocol (measurements, parameter estimation, reconciliation and privacy amplification) to the state $\bar{\rho}_{AB}$  is indistinguishable from applying it to the state $\rho_{AB}$ . Now, because the state $\bar{\rho}_{AB}$ is invariant under the action of $\mathcal{G}$, it is possible to find a purification $|\bar{\psi}\rangle_{ABE}$ of this state such that $g |\bar{\psi}\rangle_{ABE} = |\bar{\psi}\rangle_{ABE}$ for all $g \in \mathcal{G}$. This was proven in the case of the symmetric group $\mathcal{S}_n$ in \cite{ren05} and in the case of locally compact groups in \cite{CKR09}. Then it is shown in \cite{CKR09} that there exists a completely positive trace-preserving map $\mathcal{T}$ mapping $|\bar{\psi}\rangle_{ABE}$ to $|\psi\rangle_{ABE}$. Hence, the eavesdropper has at least as much information when her state corresponds to the symmetric purification $|\bar{\psi}\rangle_{ABE}$ as when her state corresponds to the (non necessary symmetric) purification $|\psi\rangle_{ABE}$. This means that considering the state $|\bar{\psi}\rangle_{ABE}$ is sufficient to evaluate the security of the protocol. As a conclusion, Alice and Bob can always assume that their bipartite state displays the same symmetry properties as the QKD protocol.

In addition to use specific symmetries of the protocol, one can simplify the security analysis further by restricting the eavesdropper's action to a certain class of attacks, for instance, \emph{collective attacks}. 
This means that the bipartite quantum state shared by Alice and Bob is assumed to be independent and identically distributed (i.i.d.), that is, that there exists a probability distribution $p(\sigma_{AB})$ on $\mathcal{H}_A \otimes \mathcal{H}_B$ such that:
\begin{equation}
\rho_{AB} = \int \sigma_{AB}^{\otimes n} p(\sigma_{AB}) {\rm d} \sigma_{AB}.
\end{equation}

In the case of protocols such as BB84 which are invariant under permutation of Alice and Bob's subsystems, it is useless to consider symmetries of the protocol when considering collective attacks since an i.i.d. state is clearly invariant under permutation of its subsystems. The converse property is not true in general. 
However, the exponential version of de Finetti theorem \cite{ren07} and the post-selection technique introduced in \cite{CKR09} show that it also holds asymptotically. 

In the case of continuous-variable QKD protocols, one can consider a specific symmetry in phase-space \cite{LKG09} which is not strictly implied by collective attacks.  The protocol GG02 is indeed invariant under conjugate passive symplectic operations applied by Alice and Bob. Physically, this invariance means that the protocol is not affected when Alice processes her $n$ modes into any passive linear interferometer while Bob processes his $n$ modes into the passive linear interferometer effecting the conjugate orthogonal transformation in phase space.
To see this, it is enough to show that the reconciliation procedure as well as the parameter estimation would perform equally well whether or not conjugate passive symplectic operations are applied. Let us consider first the reconciliation procedure which consists in turning Alice and Bob's measurement results into a identical bitstrings. Such a procedure (see Ref. \cite{LAB08} for a specific example) is designed to work in the case where Alice's classical data follow a Gaussian modulation and the correlation between Alice and Bob's data are is measured by their covariance. Since passive symplectic operations in phase space correspond to orthogonal transformations for Alice and Bob's measurement results, neither the Gaussian modulation nor the covariance of the data are affected, which guarantees that the reconciliation procedure is transparent to such transformations.
Concerning the parameter estimation, which is used in particular to compute Eve's information, it is notable that for the protocol GG02, only the covariance matrix of the state $\rho_{AB}$ should be estimated, and more specifically the \emph{transmission} and \emph{excess noise} of the quantum channel. Both these quantities are invariant under any orthogonal transformation of the data.
This means that the state $\rho_{AB}$ can safely be considered to be invariant under conjugate passive Gaussian operations appied by Alice and Bob.

Using this symmetry together with the assumption of collective attacks leads to a simple proof that the optimal collective attacks are Gaussian. More precisely, if the adversary is restricted to perform a collective attack, Alice and Bob can safely assume that this attack is Gaussian. To show this, it is enough to prove that the state $\rho_{AB}$ can be considered Gaussian. Indeed, at the beginning of the protocol, Alice prepares $n$ two-mode squeezed states, which is a $2n$-mode Gaussian state. If the quantum state shared by Alice and Bob at the end of the protocol is also Gaussian, it means that the quantum channel can be described as a Gaussian map. Our proof is based on the following Lemma.

\begin{lemma}
\label{lemma}
If a bipartite $2n$-modal quantum state $\rho_{AB}$ (for $n \geq 2$) is both i.i.d. and invariant under conjugate passive Gaussian operations, then $\rho_{AB}$ is a Gaussian state.
\end{lemma}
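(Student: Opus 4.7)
The plan is to exploit the Wigner characteristic-function picture, where both the i.i.d.\ hypothesis and the conjugate passive invariance become transparent. The i.i.d.\ form gives $\chi_{\rho_{AB}}(z_1,\ldots,z_n) = \int \prod_{i=1}^n \chi_\sigma(z_i)\, \mathrm{d}p(\sigma)$, with each $z_i$ being the phase-space vector for mode pair $(A_i, B_i)$; the invariance requires $\chi_{\rho_{AB}}$ to be fixed by the $U(n)$-action mixing Alice's components by $U$ and Bob's by $U^*$. This combination is a quantum counterpart of the classical Maxwell--Schoenberg theorem (an i.i.d.\ rotationally invariant random vector is Gaussian), and the proof strategy mirrors the classical one via functional equations.

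The central step is to plug specific choices of $U$ into the invariance. Taking $\vec z = (z,0,\ldots,0)$ and $U$ a balanced beam-splitter between modes $1$ and $2$ (legitimate for $n\geq 2$) yields
\[
\int \chi_\sigma(z)\, \mathrm{d}p(\sigma) \;=\; \int \chi_\sigma(\alpha z)\, \chi_\sigma(\beta z)\, \mathrm{d}p(\sigma)
\]
for any $|\alpha|^2+|\beta|^2=1$, and more generally, using $\vec z$ supported on several mode pairs and $U$ rotating them, produces a whole family of identities coupling moments of $\chi_\sigma$ under the mixing measure. Expanding both sides as a power series around $z=0$ and matching coefficients of each homogeneous degree $k$, the scaling identity $|\alpha|^2+|\beta|^2=1$ leaves the quadratic cumulants free while forcing a hierarchy of moment relations: at degree $k=2$ it pins down $\int \langle R\rangle_\sigma^2\,\mathrm{d}p(\sigma) = 0$ (so almost every $\sigma$ has zero mean), at $k=4$ it gives the Gaussian kurtosis identity $\int \langle X^4\rangle_\sigma\,\mathrm{d}p = 3\int \langle X^2\rangle_\sigma^2\,\mathrm{d}p$, and so on at every order. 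Using \emph{all} $U\in U(n)$ together with the freedom to probe two, three, \ldots, up to $n$ mode pairs simultaneously extracts enough constraints that the only admissible mixing measures are those concentrated on a single Gaussian state, so that $\rho_{AB}=\sigma^{\otimes n}$ with $\sigma$ Gaussian.

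The main obstacle is the final reduction from ``averaged Gaussian moment identities'' to ``pointwise Gaussian mixing measure'': cumulants of a mixture are not the mixture of cumulants, so matching orders in the functional equation couples first-, second- and higher-moment data across $\mathrm{d}p$ in a non-trivial way. Establishing the conclusion rigorously calls for either (i) a density argument on finite-energy states, where $\chi_\sigma$ is analytic near the origin and all formal manipulations with derivatives and integrals are justified, followed by a limiting procedure; or (ii) an inductive argument that uses $U(k)$-invariance for increasing $k\leq n$ to fix all $k$-mode marginals as Gaussian with a common covariance matrix, forcing the measure $\mathrm{d}p$ to collapse to a point mass on a Gaussian state. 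Once this reduction is in place the lemma follows immediately: $\rho_{AB}=\sigma^{\otimes n}$ for a Gaussian single-copy $\sigma$ is itself a $2n$-mode Gaussian state.
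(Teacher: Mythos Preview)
Your Maxwell--Schoenberg analogy is the right intuition, but the route you chose---moment matching under a mixing measure---leaves exactly the gap you yourself flag: you never actually carry out the reduction from ``averaged Gaussian moment identities'' to a point mass on a Gaussian $\sigma$, and neither of the two sketched strategies (i), (ii) is made concrete. As it stands, the argument stops at a plausibility claim rather than a proof.

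The paper avoids this obstacle entirely by reversing the order in which the two hypotheses are used. Working with the Wigner function, it first exploits \emph{only} the $U(n)$ invariance: a direct orbit argument in $\mathbb{C}^n\times\mathbb{C}^n$ shows that any $W(x,p,y,q)$ invariant under all conjugate passive symplectics must factor through the three scalar invariants $\|x\|^2+\|p\|^2$, $\|y\|^2+\|q\|^2$, and $x\cdot y - p\cdot q$. Only then is i.i.d.\ invoked, and here the paper tacitly takes the strict product form $W=\prod_i W_1$ rather than the mixture you carry around. Since the three invariants are additive in the mode index, the product structure yields a Cauchy functional equation $f(a+a',b+b',c+c')\propto f(a,b,c)\,f(a',b',c')$, whose (measurable) solutions are exponential---hence $W$ is Gaussian. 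No power-series expansion, no moment hierarchy, no collapse of a mixing measure.

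Two takeaways. First, the detour through the mixture $\int\sigma^{\otimes n}\,\mathrm{d}p(\sigma)$ is precisely what creates your hard step; the paper proves the lemma for a genuine tensor power, and if you restrict to that case your beam-splitter identity already becomes $\chi_\sigma(z)=\chi_\sigma(\alpha z)\,\chi_\sigma(\beta z)$ with no integral, which can then be finished in the same spirit. Second, the decisive simplification you missed is the invariants-first reduction: once $W$ (or equivalently $\chi$) is known to depend only on three additive scalars, the i.i.d.\ hypothesis becomes a one-line functional equation instead of an infinite tower of moment constraints.
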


\begin{proof}
Let us first rephrase the lemma in phase-space representation. Any state $\rho_{AB}$ is completely characterized by its Wigner function $W_{\rho}(x, p, y, q)$ where $x, p$ are $n$-dimensional vectors corresponding to Alice's phase-space and $y, q$ correspond to Bob's phase-space. The application of a passive Gaussian operation on Alice's modes and of its conjugate operation on Bob's modes maps the state $\rho$ to the state $\rho'$. The Wigner function $W_{\rho'}(x, p, y, q)$ of $\rho'$ is equal to $W_{\rho}(x', p', y', q')$ for the change of coordinates  $(x', p', y', q') = S^T (x, p, y, q)$ and the symplectic map $S$ can be written as
\begin{equation}
\label{symplectic}
S = S(X,Y) \equiv
\begin{pmatrix}
X & Y & 0 & 0 \\ 
-Y & X & 0 & 0 \\
0 & 0 & X^T & -Y^T\\  
0 & 0 & Y^T & X^T\\  
\end{pmatrix} 
\end{equation}
where the matrices $X$ and $Y$ are such that \cite{ADMS95}:
\begin{eqnarray}
X^T X+Y^T Y &=& X X^T + Y Y^T =1\\
X^T Y &,& X Y^T \quad \mathrm{symmetric}.
\end{eqnarray}
In order to prove the lemma, we observe that if any such map $S$ leaves the Wigner function invariant, then $W$ can only depend on three parameters which are $||x||^2+||p||^2$, $||y||^2+||q||^2$ and $x \cdot y - p \cdot q$ (a proof of this fact can be found in Appendix \ref{proof}). This means that there exists a function $f: \mathbb{R}^+ \times \mathbb{R}^+ \times \mathbb{R} \mapsto \mathbb{R}$ such that:
\begin{multline}
W_{\rho}(x, p, y, q) \\
= f(||x||^2+||p||^2, ||y||^2+||q||^2, x \cdot y - p \cdot q).
\end{multline}
Then, since $\rho_{AB}$ is an i.i.d. state, the same must be true for $f$, meaning in particular that
\begin{multline}
f\left(\sum_{i=1}^n x_i^2 + p_i^2, \sum_{i=1}^n  y_i^2 + q_i^2,\sum_{i=1}^n  x_i y_i - p_i q_i\right) \\
\propto \prod_{i=1}^n f(x_i^2 + p_i^2, y_i^2 + q_i^2, x_i y_i - p_i q_i),
\end{multline}
which is exactly the characterization of the exponential function. Hence, $f$ and also $W$ are exponential in $||x||^2+||p||^2$, $||y||^2+||q||^2$ and $x \cdot y - p \cdot q$, which means that the state $\rho_{AB}$ is a Gaussian state. This concludes our proof.
\end{proof}

The protocol GG02 is invariant under conjugate passive symplectic operations applied by Alice and Bob. 
Hence Alice and Bob can safely assume that their state $\rho_{AB}$ displays the same symmetry. 
Restricting the analysis to collective attacks, one can use Lemma \ref{lemma} to conclude that the state $\rho_{AB}$ can be considered to be Gaussian.
Since the inital state produced by Alice, a (Gaussian) two-mode squeezed vacuum is transformed through the quantum channel into another Gaussian state, this means that the action of the channel, that is of the attack, can be safely considered to be Gaussian, which gives a simple proof that Gaussian attacks are optimal among collective attacks.

\section{Conclusion and perspectives}

In this paper, we gave an alternative proof that Gaussian attacks are optimal against GG02 among all collective attacks. This new proof makes use of symmetries of the protocol in phase-space, and does not require to consider specific properties of the entropy as in previous proofs \cite{GC06,NGA06}. 
A natural question is whether this technique can be exploited for variants of the GG02 protocol.

Let us consider first protocols with a discrete modulation, such as \cite{LG09}. In this case, our new proof cannot be applied directly as protocols with a discrete modulation are less symmetric than protocols with a Gaussian modulation. Indeed, not all rotations in phase-space leave the protocol invariant: only the orthogonal transformations leaving the modulation unchanged, that is, transformations belonging to the symmetry group of the hypercube are relevant in this case. This group, however, is much smaller that the group considered here, and one cannot conclude directly that the state $\rho_{AB}$ can be safely considered to be Gaussian. Note that this is still true but has to be proven with a different approach \cite{LG09,LG10} based on the extremality of Gaussian states \cite{WGC06}.

The second class of protocols one could consider is protocols with a post-selection procedure \cite{SRL02,LSS05,NH05,NH06,HL07}. These protocols have not yet be proven secure against general collective attacks because it is not known whether Gaussian attacks are optimal among collective attacks. The technique presented in this paper cannot be used either for protocols displaying a post-selection step as this post-selection explicitly breaks the symmetry of the protocol in phase-space. 

In addition to its simplicity, our new proof turns out to be particularly useful for the finite size analysis of the security of continuous-variable QKD protocols. Indeed, a specificity  of the finite size analysis is that Alice and Bob cannot assume to perfectly know the quantum state they share. 
For continuous-variable protocols in general, this is in fact theoretically impossible as their state belongs to an infinite dimensional Hilbert space, and therefore requires an infinite number of parameters to be fully described. Fortunately, for protocols such as GG02 where the state can safely be considered to be Gaussian, Alice and Bob only need to know their covariance matrix which depends on three parameters: the modulation variance which is chosen by Alice as well as the transmission and the excess noise of the quantum channel. 
These parameters are estimated by revealing part of Alice and Bob's data. In order to proceed with this estimation, one needs a statistical model and choosing a normal model seems quite natural. However, previous proofs of Gaussian optimality presented in \cite{GC06, NGA06} assume that the covariance matrix is known from Alice and Bob and cannot justify the use of a normal statistical model for its estimation. The proof presented here, on the contrary, allows for such a justification (see Appendix \ref{normal_model} for details). 

The fact that our proof applies to finite size analysis is crucial as our ultimate goal is clearly to assess the security of practical implementations, which are necessary finite. A general finite-size analysis of continuous-variable protocols will be the subject of future work.

\appendix

\section{Complete proof of Lemma \ref{lemma}}
\label{proof}

Before considering the general case of Wigner functions, let us first consider the case of a probability distribution $p(x,y)$ which is invariant under orthogonal transformations applied to both $x$ and $y$. In other words, for any $R \in O(n)$, one has $p(Rx,Ry)=p(x,y)$. Such a symmetry property clearly implies that $p(x,y)$ can only depend on three parameters, namely $||x||, ||y||$ and $x\cdot y$. With Wigner functions, the argument is more subtle, and is detailed below.

We want to show that any function $W: \mathbb{R}^n \times \mathbb{R}^n \times \mathbb{R}^n  \times \mathbb{R}^n  \rightarrow \mathbb{R}$, such that $W(x,p,y,q) = W(S^T(x,p,y,q))$ for any symplectic transformation $S$ of the form given by Eq. \ref{symplectic}, only depends on the following three parameters: $||x||^2+||p||^2$, $||y||^2+||q||^2$ and $x \cdot y - p \cdot q$.

Our goal is therefore to prove the following: for any pair of quadruples $(x, p, y, q)$ and $(x', p', y', q')$ such that 
\begin{equation}
\label{condition}
\left\{
\begin{array}{ccc}
||x||^2+ ||p||^2=||x'||^2+||p'||^2\\
||y||^2+||q||^2=||y'||^2+||q'||^2\\
x \cdot y - p \cdot q = x' \cdot y' - p' \cdot q',
\end{array}
\right.
\end{equation}
one has: $W(x,p,y,q) = W(x',p',y',q')$. 

Let us introduce the following vectors:
\begin{eqnarray}
a = x + i p &,& a' = x' + i p'\\
b = y - i q &,& b' = y' - i q'.
\end{eqnarray}
The condition \ref{condition} can be rewritten as: 
\begin{equation}
\label{condition2}
\left\{
\begin{array}{ccc}
||a||^2=||a'||^2\\
||b||^2=||b'||^2\\
\mathrm{Re}\langle a | b \rangle = \mathrm{Re}\langle a' | b' \rangle
\end{array}
\right.
\end{equation}
where $\mathrm{Re}(x)$ refers to the real part of $x$.
It is sufficient to prove that there exists an unitary transformation $U \in U(n)$ such that $U a =a'$ and $U b = b'$. Indeed, one can split $U$ into real and imaginary parts: $U = X - iY$, and it is easy to check that $S(X,Y)$ gives the correct change of coordinates. Since $W$ is invariant under this change of coordinates, one concludes that $W(x,p,y,q) = W(x',p',y',q')$. 

Let us introduce the following notations: $A \equiv ||a||^2=||a'||^2, B \equiv ||b||^2=||b'||^2$ and $C \equiv \mathrm{Re}\langle a| b \rangle = \mathrm{Re}\langle a'| b' \rangle$.

Consider first the case where $a$ and $b$ are colinear. This means that $b = C/A a$ and $C = \pm \sqrt{AB}$. 
Using the Cauchy-Schwarz inequality, $|C| = |a' \cdot b'| \leq ||a'|| \cdot||b'|| =  \sqrt{AB}$ with equality if and only if $a'$ and $b'$ are colinear. This means that $a'$ and $b'$ are colinear and that $b' = (C/A) \, a'$. 
Because $||a|| = ||a'||$, the reflexion $U$ across the mediator hyperplane of $a$ and $a'$ is a unitary transformation that maps $a$ to $a'$. This reflexion also maps $b$ to $b'$. This ends the proof in the case where $a$ and $b$ are colinear.

Let us now consider the general case where $a$ and $b$ are not colinear. It is clear that $a'$ and $b'$ cannot be colinear either. We take two bases $(a, b, f_3, \cdots, f_n)$ and $(a', b', f_3', \cdots, f_n')$ of $\mathbbm{C}^n$ and use the Gram-Schmidt process to obtain two orthonormal bases $\mathcal{B}=(e_1, \cdots, e_n)$ and $\mathcal{B}'=(e_1', \cdots, e_n')$. Note that vectors $e_1, e_2, e_1'$ and $e_2'$ are given by:
\begin{eqnarray}
e_1 = \frac{a}{\sqrt{A}} &,& e_2 = \frac{b - \langle e_1|b\rangle e_1}{||b - \langle e_1|b\rangle e_1 ||}\\
e_1' = \frac{a'}{\sqrt{A}} &,& e_2' = \frac{b' - \langle e_1'|b'\rangle e_1'}{||b' - \langle e_1'|b\rangle e_1' ||}.
\end{eqnarray}
Let us call $U$ the unitary operator mapping $\mathcal{B}$ to $\mathcal{B}'$. It is easy to see that $U$ maps $a$ and $b$ to $a'$ and $b'$, respectively. This concludes our proof.

\section{Normal statistical model}
\label{normal_model}

In this section, we discuss briefly the problem of parameter estimation in continuous-variable protocols with a Gaussian modulation. This question is particularly relevant when one is concerned with a finite-size analysis of the security of the protocol (a more detailed presentation can be found in \cite{lev09,LGG10}).

One of the main differences between the asymptotic and the finite-size study of a protocol lies in the parameter estimation. In the former case, one typically assumes that the quantum state $\rho_{AB}$ is known from Alice and Bob while in the latter case, this state needs being estimated. 

For continuous-variable protocols with a Gaussian modulation, it is known that Gaussian attacks are optimal (among collective attacks) and therefore, the secret key rate only depends on the covariance matrix of $\rho_{AB}$. This means that only this covariance matrix, that is, a finite number of parameters,	 needs to be estimated in practice. Moreover, using the symmetries described in this article, one can see that three parameters are in fact sufficient, namely Alice's and Bob's variances, and their covariance. More precisely, the covariance matrix $\Gamma_{AB}$ of the state $\rho_{AB}$ can be assumed to have the following form:
\begin{equation}  
\Gamma_{AB}=
\begin{pmatrix}
X \mathbbm{1}_{2n} & Z \sigma_z \\
Z \sigma_z & Y \mathbbm{1}_{2n}\\
\end{pmatrix}
\end{equation}
with $\sigma_z=\mathrm{diag}(1,-1, 1, -1, \cdots, 1, -1)$.

Furthermore, in a Prepare and Measure implementation of the protocol, $X$ simply corresponds to Alice's modulation variance, which is \emph{a priori} known from Alice and Bob. Hence, only two paramters remain to be estimated in practice.
Asymptotically, this is not a problem since one can assume that the parameter estimation is done perfectly. However, for a finite-size analysis, which is eventually required to prove the security of a practical scheme, it is crucial to have an upper bound on the error in the parameter estimation. Indeed, in an adversarial scenario such as QKD, the legitimate parties should always consider the \emph{worst} covariance matrix compatible with their data except with some small probability $\epsilon$.

This can be easily done once a statistical model is given for the data ${\bf x} = (x_1, \cdots, x_n)$ and ${\bf y} = (y_1, \cdots, y_n)$ observed by Alice and Bob, respectively.

Whereas this could be done even without a model in the case of bounded parameters such as the quantum bit error rate for discrete-variable QKD protocols, this is much more complicated for \emph{a priori} unbounded such as the excess noise in the GG02 protocol.

Then the demonstration given above that the state $\rho_{AB}$ can be considered Gaussian has a  crucial consequence~: since
the classical data ${\bf x}$ and ${\bf y}$ are obtained by performing Gaussian measurements (either homodyne or heterodyne detection), 
the joint distribution of $({\bf x}, {\bf y})$ corresponds to some marginal of a Gaussian Wigner function, and therefore it is also Gaussian.
As a consequence, the variables $x_i$ and $y_i$ (for $i \in \{1, \cdots, n\}$) are related through:
\begin{equation}
y_i = \alpha x_i + z_i,
\end{equation}
where $\alpha$ is a constant and $z_i$ is a Gaussian random variable: $z_i \sim \mathcal{N}(0,\sigma^2)$ which is independent of $x_i$. This is 
the definition of a normal statistical model, where one tries to estimate the values of $\alpha$ and $\sigma^2$. For such a model, one can bound the errors made in the estimation of both $\alpha$ and $\sigma^2$, and therefore on $Y$ and $Z$ (since these are simple functions of $\alpha$ and $\sigma^2$).
Finally, and this is a crucial step in finite-key analysis, one can compute the worst key rate compatible with the data, except with probability $\epsilon$.

\begin{acknowledgments}
We thank Fr\'ed\'eric Grosshans for helpful remarks on a previous version of this work. 
We acknowledge support from Agence Nationale de la Recherche under projects PROSPIQ (ANR-06-NANO-041-05) and SEQURE (ANR-07-SESU-011-01).
\end{acknowledgments}


\end{document}